\newtheorem{theorem}{Theorem}
\newtheorem{lemma}[theorem]{Lemma}
\newtheorem{definition}{Definition}
\newcommand{\bid}{\text{bid}\xspace}
\newcommand{\optbidder}{i^*} 
\newcommand{\otherbidder}{i^o}   
\newcommand{\algbidder}{\textsc{algb}\xspace}
\newcommand{\ctr}{\mbox{ctr}\xspace}
\newcommand{\cpc}{\mbox{cpc}\xspace}
\newcommand{\cw}{\textsc{LW}\xspace}
\newcommand{\spend}{\text{spend}}
\newcommand{\eq}{\textsc{EQ}\xspace}
\newcommand{\opt}{\textsc{OPT}\xspace}
\newcommand{\optbj}{\bid(\optbidder (j), j)}
\newcommand{\otherbj}{\bid(\otherbidder (j), j)}
\newcommand{\randauction}{\text{\sc{Rand}}}
\newcommand{\aranyakignore}[1]{}
\title{Auction Design in an Auto-bidding Setting: Randomization Improves Efficiency Beyond VCG}
\author{Aranyak Mehta \thanks{Google Research, Mountain View, CA, \texttt{aranyak@google.com}}}
\date{}
\begin{document}
\maketitle

\begin{abstract}
Auto-bidding is an area of increasing importance in the domain of online advertising. We study the problem of designing auctions in an auto-bidding setting with the goal of maximizing welfare at system equilibrium. Previous results showed that the price of anarchy (PoA) under VCG is 2 and also that this is tight even with two bidders. This raises an interesting question as to whether VCG yields the best efficiency in this setting, or whether the PoA can be improved upon. We present a prior-free randomized auction in which the PoA is approx. 1.896 for the case of two bidders, proving that one can achieve an efficiency strictly better than that under VCG in this setting. We also provide a stark impossibility result for the problem in general  as the number of bidders increases -- we show that no (randomized) anonymous truthful auction can have a PoA strictly better than 2 asymptotically as the number of bidders per query increases. While it was shown in previous work that one can improve on the PoA of 2 if the auction is allowed to use the bidder's values for the queries in addition to the bidder's bids, we note that our randomized auction is prior-free and does not use such additional information; our impossibility result also applies to auctions without additional value information.
\end{abstract}

\section{Introduction}
Auto-bidding is an area of increasing importance in the online advertising ecosystem (see, e.g.,~\cite{googleautobiddingsupport,fbautobiddingsupport}, with significant innovation and adoption in recent years. With auto-bidding, each advertiser states its goals and constraints to an auto-bidding agent, which then converts those into per-auction bids. In a setting where all advertisers use auto-bidding, there one agent per advertiser, bidding optimally with respect to the other agents. We are then interested in the properties of the system equilibrium.

A prototypical auto-bidding setting is that of target-cost-per-acquisition (tCPA) in which the advertiser aims to maximize their conversions (sales after an ad-click) subject to an ROI constraint that the average cost-per-conversion is no bigger than a given target. Target-return-on-ad-spend (tROAS) generalizes tCPA to take into account the value of a conversion as well. Recent research captures the theoretical problems in this setting. In \cite{AggarwalBM19}, the authors introduced the problem of auto-bidding under advertiser goals and constraints and presented two results: First, they presented a bidding formula to bid optimally into truthful auctions. Secondly, they proved a price of anarchy result: If all advertisers adopt such optimal auto-bidding to bid into a (per-query) VCG auction, then the welfare at equilibrium is at least a half of the value obtained an an optimal allocation. They also presented an instance of the problem (with two bidders) in which there exists an equilibrium with welfare equal to a half of the optimal, showing that their analysis is tight. Thus, they showed that the price of anarchy (PoA) under VCG is 2. Further research~\cite{DengMMZ21} improves on the PoA of 2 by allowing the auction access to the values of the advertisers. For example, they show that by adding a boost to each bidder's bid equal to $c$ times the value for the ad, and then running a second-price auction yields a PoA of $\frac{2+c}{1+c}$.

In this paper, we investigate the design of auctions to improve on the PoA in the setting introduced in~\cite{AggarwalBM19}. As opposed to the direction taken in~\cite{DengMMZ21}, we consider the setting where the auction does not have access to any other information (e.g., the values) besides the bid. This restriction has practical motivation: the auction may not have access to the values of each impression, e.g., the bidding system which converts the values, objectives, and constraints into bids may be a separate system, or could even be owned by an third party different from the auction platform. Also, as opposed to the Bayesian setting, we do not assume that the auction has access to the distribution of the values or targets, or indeed if there is such an underlying distribution -- i.e., our auction is {\em prior-free}. We ask the question: \emph{Is it possible to improve over VCG in the prior-free setting without any additional information besides the bids?}

\subsection{Results}
We first consider the case of two bidders, and introduce a simple prior-free randomized truthful auction $\randauction(\alpha, p)$ (parameterized by a constant $\alpha \geq 1$, and probability $p \in (0, 1/2]$) and show ({\bf Theorem~\ref{thm:poa}}) that with a choice of $p = \frac{2}{5}$ and $\alpha \in [1, \frac{1 + \sqrt{85}}{6}]$, the auction has a price of anarchy no worse than $\frac{2\alpha + 6 + \frac{2}{\alpha}}{2\alpha + 1 + \frac{2}{\alpha}}$. Choosing $\alpha^* = \frac{1 + \sqrt{85}}{6} \simeq 1.703$, this shows that the PoA of $\randauction(\alpha^*, \tfrac{2}{5})$ is approximately $1.896$.
 We also show ({\bf Theorem~\ref{thm:poa-tight-example}}) that our analysis is tight for this range of $\alpha$, i.e., there is an instance in which the welfare for $\randauction(\alpha, \tfrac{2}{5})$ at equilibrium is precisely $\frac{2\alpha + 6 + \frac{2}{\alpha}}{2\alpha + 1 + \frac{2}{\alpha}}$. We note that when $\alpha = 1$ then the auction becomes identical to the second-price auction, and the PoA bound reduces to $2$, which means that our result strictly generalizes the result in~\cite{AggarwalBM19}.

Thus we have a prior-free auction with equilibrium welfare strictly better than that of VCG (\cite{AggarwalBM19} provides a tight example with two bidders for the PoA of VCG). This is surprising, and highlights the difference between the ROI constrained auto-bidding setting and the standard quasi-linear setting, in which VCG achieves maximum welfare. Further, similar results in the quasi-linear setting for improving over the revenue of VCG depend on either prior knowledge of the distribution of values (\emph{Bayesian auction design}) ~\cite{Myerson81}, or on evaluating over instances in which the values are picked from some unknown distribution from a well-formed class like regular distributions (\emph{prior-independent auction design}, e.g.,~\cite{DhangwatnotaiRY15}). In comparison, our result is in the prior-free setting, the auction has no other input besides the bids and is evaluated on any set of values, yet improves on the welfare of VCG in equilibrium. 
We note that the welfare objective in our setting is Liquid welfare which captures both the desire and ability to pay.
Even though the utility function is non-standard, the questions of incentive compatibility and mechanism design remain important: As shown in~\cite{AggarwalBM19}, it is only in a truthful auction that there is a simple (uniform) optimal bidding formula. We also restrict our attention to truthful auctions.

Our result above shows a strict improvement over the PoA of VCG in the two bidder setting. One can therefore ask for an auction which can get a PoA better than 2 more generally. While it is possible to extend the analysis in Theorem~\ref{thm:poa} to a fixed number of bidders and retain a PoA strictly less than 2 (we do not include this slightly generalized analysis here), we complement our positive result with a stark impossibility result in this setting. We show ({\bf Theorem~\ref{thm:poa-multiple-half}}) that as the number of bidders increases (in particular the number of bidders bidding per query increases), then the PoA of any (randomized) truthful anonymous auction is asymptotically no better than 2. This result closes the problem introduced in~\cite{AggarwalBM19} for prior-free auctions (restricted to auctions without any additional information besides the bids).

Finally, we note that the model, auction design, and analytical results are in a stylistic and theoretical setting (see related work below), and is not meant to accurately represent all aspects and complexities of real-world practical implementations.

\subsection{Related Work}
\label{sec:related-work}
As mentioned above, the closest related work is that in~\cite{AggarwalBM19} and \cite{DengMMZ21}. The work in~\cite{AggarwalBM19} introduced the problem and provided the PoA analysis for VCG in a more general setting (with multiple ROI constraints); we restrict our analysis to the prototypical tCPA setting (but it applies directly to tROAS), and also to the single slot per query setting.
In the context of the results in~\cite{DengMMZ21}, it is interesting that we can also improve over the PoA of VCG without the additional value information. There has also been recent work on understanding the optimal mechanism design in Bayesian settings~\cite{GolrezaiLP21,BalseiroDMMZ21} (with the latter describing a few different models of the setting); in comparison our work does not consider advertiser incentives (e.g., to misreport the targets) but only the system response (as in~\cite{AggarwalBM19,DengMMZ21}).

The auction introduced here ($\randauction(\alpha, p)$) is reminiscent of auctions studied in a few previous results. We will define $\randauction(\alpha, p)$ in Sec.~\ref{sec:prelim}, but informally, it allocates to the higher bidder only if its bid is an $\alpha \geq 1$ factor greater than the other bidder; otherwise it allocates to both  bidders randomly, with a higher probability to the higher bidder.

In~\cite{FuILS15}, the authors introduced a prior-independent auction called $(\epsilon, \delta)$-inflated second-price auction, where the parameter $\delta \geq 1$ is similar to our $\alpha$ parameter, but used differently: with probability $\epsilon$ it runs a second-price auction, and with the rest of the probability the highest bidder wins only if its bid is greater than the next highest bid by a factor of $\delta \geq 1$ (otherwise the item is unallocated). For this auction, it was proved that it achieves a fraction strictly better than $\frac{n-1}{n}$ of the optimal revenue (in the standard quasi-linear utility setting, where $n$ is the number of bidders), and an improved ratio of 0.512 was provided for two bidders. This result was generalized in~\cite{AllouahB20} which introduced a family of prior-independent auctions called threshold-auctions, and proved stronger results for revenue in the prior-independent setting with a focus on the two bidder setting. In fact, $\randauction(\alpha, p)$ lies in the family of threshold-auctions. 
One may expect that other similar members of threshold-auctions family may also improve over the PoA of VCG in the auto-bidding setting -- although perhaps not the specific auctions studied in~\cite{FuILS15,AllouahB20}. We note that a similar auction called the ratio-auction was described in~\cite{hartline2014optimal} in a different context.

The definition of liquid welfare used in the auto-bidding setting \cite{AggarwalBM19} is based on the liquid welfare definition introduced for the budgeted setting in~\cite{DobzinskiL14} (see also~\cite{AzarFGR17}). 
 
Finally, we note that prior to ROI based auto-bidding products such as tCPA and tROAS, a well-established and well-studied model was that of budgeted optimization -- how to bid under simple budget constraints. There are several lines of work in this model, e.g.,~\cite{FeldmanMPS07,BalseiroG19} among others.

\section{Preliminaries}
\label{sec:prelim}

\subsection{Auto-bidding and tCPA / tROAS constraints}
In the general auto-bidding problem introduced in \cite{AggarwalBM19}, each advertiser has a goal and a set of constraints. We will restrict our attention to tCPA, in which the advertiser's goal is to maximize the number of conversions, (sales after the click on the ad) subject to a constraint which says that the average cost of the conversions is no more than an advertiser input target $T$. 
In other words, the goal in tCPA bidding is to maximize the conversions subject to the expected spend being at most $T$ times the expected conversions.Thus the problem for a single bidder $i$ is:
\begin{align}
    \text{Maximize } &\sum_{j\in Q} x_{ij} \ctr_{ij} v_{ij} \label{lp:tcpa}\\
    s.t. \sum_{j\in Q} x_{ij} ctr_{ij} cpc_{ij} &\leq T(i)\cdot \sum_{j\in Q} x_{ij} ctr_{ij} v_{ij}\notag\\
    \forall j\in Q: 0 &\leq x_{ij} \leq 1\notag
\end{align}
Here, $Q$ is the set of queries, the $x_{ij}$ are the decision variables as to whether advertiser $i$ should buy the slot on the $j$th query (we assume a single slot per query for simplicity), $\ctr_{ij}$ is the click-through-rate for an ad of advertiser $i$ on the $j$th query (the probability of a click given an impression),  $\cpc_{ij}$ is the cost-per-click of the ad (determined by the auction from the other advertisers' bids and hence not known in advance), and $T(i)$ is advertiser $i$'s input target CPA. Finally $v_{ij}$ is the value of a click on an ad for advertiser $i$ on the $j$th query. For tCPA, $v_{ij}$ is the conversion rate (the probability of a conversion given a click). This formulation easily extends to tROAS by taking $v_{ij}$ to be the total value after the click, e.g., value of a sale.\\

\noindent{\bf Bidding formula. }It was shown in~\cite{AggarwalBM19} that the bidder can bid according to a simple bidding formula based on the optimal value of the variables of the dual of LP~(\ref{lp:tcpa}). If the auction is truthful, and we have the correct values of the optimal dual variables, then this bidding formula is optimal, i.e., the bidder $i$ buys the optimal solution to LP~(\ref{lp:tcpa}). In the tCPA setting above, given the optimal dual variable $\gamma_i \geq 0$ for the tCPA constraint, the optimal bid for bidder $i$ takes the simple form
\begin{equation}
\label{eq:bid-mult}
    bid(i, j) = \mu_i \cdot v_{ij}, \mbox{~~~~  where } \mu_i := 1 + \frac{1}{\gamma_i}
\end{equation}
$\mu_i \geq 1$ is called the bid multiplier for the bidder $A$.\medskip

{\bf Equilibrium. } An instance of the problem is a set of queries $j \in Q$ and a set of bidders $i\in A$, each with its own (private) tCPA constant $T(i)$. Each $(i,j)$ pair has a $ctr_{ij}$ and value $v_{ij}$. We are interested in the allocation outcome when all advertisers bid using the optimal bidding formula (\ref{eq:bid-mult}). Note that each bidder's bids depend on the dual optimal variable for its LP (\ref{lp:tcpa}) in which the cpcs are determined in the auction based on the other bidders' bids. Thus we have to study the system at equilibrium. 
\begin{definition}
\label{def:equilibrium}
Consider a (possibly randomized) auction $\mathcal{A}$ and an instance $\mathcal{I}$ with set $A$ of bidders and $Q$ of queries. Let $v_{ij}$ be the value of bidder $i$ for query $j$ in the instance, and consider a set of bid-multipliers $\{\mu_i\}_{i\in A}$, so that $bid(i, j) := \mu_i v_{ij}$ is the bid of bidder $i$ for query $j$.
Let $\{ x_{ij} \}_{i\in A, j\in Q}$ be the (probabilistic) allocation achieved with bids $bid(i,j)$ under the auction $\mathcal{A}$, and let $p_{ij}$ be the per-unit price (cpc) charged to bidder $i$ for query $j$. For $\Delta, \gamma \geq 0$, the bids $\{bid(i, j)\}$ are said to be in a $(\Delta, \gamma)$-equilibrium if the following hold.
\begin{itemize}
    \item Each bidder satisfies its tCPA constraint up to a multiplicative factor of $\gamma$:
    $$\forall i\in A: \sum_{j\in Q} p_{ij}ctr_{ij}x_{ij} ~\leq~ \left(1 + \gamma\right) T(i) \sum_{j\in Q} v_{ij}ctr_{ij} x_{ij}$$
    \item No bidder can deviate from its bid-multiplier unilaterally and gain more than a $\Delta$ amount of additive value while still satisfying its tCPA constraint (up to $\gamma$). More precisely, suppose bidder $i$ changes its bid-multiplier to $\mu'_i$ and this changes its allocation to $\{x'_{ij}\}$ and the prices to $\{p'_{ij}\}$. Then $\forall i\in A$
    \begin{align*}
       \mbox{Either\ \ \ \  }\sum_j v_{ij}ctr_{ij} x'_{ij} ~&<~  \sum_j v_{ij} ctr_{ij}x_{ij} + \Delta\\&~~~~\mbox{(does not gain more than } \Delta),\\
        \mbox{or  \ \  \ \ } \sum_j p'_{ij}ctr_{ij} x'_{ij} ~&>~  \left(1 + \gamma\right)\sum_j v_{ij}ctr_{ij} x'_{ij}\\&\mbox{(violates the tCPA constraint by more than $\gamma$).}
    \end{align*}
\end{itemize}
\end{definition}

Our positive result on PoA (Sec.~\ref{sec:poa-2-bidders}) will use an exact definition of equilibrium (with $\Delta = \gamma =  0)$, while our impossibility result (Sec.~\ref{sec:many-bidders}) holds for any $\Delta, \gamma >0$.\medskip

{\bf Liquid Welfare. } Since the spend is constrained, we need to define welfare carefully. In this setting the appropriate definition of welfare is the Liquid Welfare (\cw), first defined in \cite{DobzinskiL14} for the budgeted allocation case, and generalized to the auto-bidding setting in \cite{AggarwalBM19}. Liquid welfare captures both the willingness and the ability to pay for a given allocation. For a general auto-bidding problem (with potentially multiple constraints on spend), the liquid welfare for a given allocation is defined \cite{AggarwalBM19} as the sum over bidders, of the minimum value of the right hand sides of the constraints of the bidder's LP, in that allocation. The tCPA LP (\ref{lp:tcpa}) has a single constraint, and the LW becomes equivalent to the sum of tCPA-weighted conversions: 
\begin{definition}
\label{def:lw}
In the tCPA setting,
\begin{equation}
\label{eq:def-lw}
\cw(\{x_{ij}\}) = \sum_{i \in A} T(i) \sum_{j\in Q} x_{ij}\ctr_{ij}v_{ij}
%&= \sum_{j\in Q} \sum_{i\in A} x_{ij} T(a) \ctr_{ij}v_{ij} 
\end{equation}
where $A$ is the set of advertisers, $T(i)$ is the target CPA for advertiser $i$, $\{x_{ij}\}$ denotes the allocation, and $v_{ij}$ is the conversion rate of an ad of advertiser $i$ on query $j$. Thus the liquid welfare is the total number of tCPA-weighted conversions.
\end{definition}

Define $\cw(i, \{x_{ij}\})$ as the contribution of advertiser $i$ to the liquid welfare (Eq. \ref{eq:def-lw}), and let $\spend(i, \{x_{ij}\})$ denote the total expected spend of advertiser $i$ in the auction. We see from LP (\ref{lp:tcpa}) that the latter is a lower bound on the former.
\begin{equation}
    \label{eq:spend-leq-a-lw}
    \spend(i, \{x_{\cdot}\}) \leq LW(i, \{x_{\cdot}\})
\end{equation}

{\bf Price of Anarchy} is defined as
$$PoA = \max_{\mathcal{I}} \max_{eq \in Eq(\mathcal{I})} \tfrac{LW(OPT(\mathcal{I}))}{LW(eq)} $$
where $\mathcal{I}$ denotes an instance of the problem, $Eq(\mathcal{I})$ is the set of equilibria in instance $\mathcal{I}$, and $OPT(\mathcal{I})$ denotes the allocation with the highest liquid welfare for that instance.

\subsection{Auction $\randauction(\alpha, p)$}
We define auction $\randauction(\alpha, p)$ for two bidders parameterized by $\alpha \geq 1$ and $p\in (0, 1/2]$. As mentioned in Sec.~\ref{sec:related-work}, this is a member of the \emph{threshold-auctions} family from~\cite{AllouahB20}, and somewhat similar to the \emph{bid-inflation} auction from~\cite{FuILS15} and the ratio-auction from~\cite{hartline2014optimal}.
\begin{definition}
\label{def:rand}
Let the bids be $b_1 , b_2$, and without loss of generality assume that $b_1 \geq b_2$. Then the allocation function is as follows:
\begin{itemize}
    \item If $b_1 \geq \alpha\cdot b_2$, allocate to bidder 1.
    \item Else, allocate to bidder 1 w.p. $1-p$, and to bidder 2 w.p. $p$.
\end{itemize}
The truthful prices fall out of the allocation function in the natural manner:
\begin{align*}
    &\mbox{If } b_1 \geq \alpha\cdot b_2, \mbox{ then } cost_1 = b_2\left(\frac{p}{\alpha} + (1-2p) + p\cdot\alpha\right) \\
    &\mbox{Else, } E[cost_1] = b_2 \left(\frac{p}{\alpha} + (1-2p)\right), \mbox{ and } E[cost_2] = \frac{p\cdot b_1}{\alpha}
\end{align*}
\end{definition}

Note that $\randauction(1, p)$ is identical to a second-price auction for any $p$.

\section{An improved price of anarchy for the two bidders case}
\label{sec:poa-2-bidders}
In this section we prove our main positive result, that the price of anarchy of $\randauction(\alpha, p)$ in the two bidder setting, is strictly better than 2 for a choice of $\alpha$ and $p$, and hence is a strict improvement over VCG.
\subsection{Techniques and Intuition}
The proof for the PoA of 2 for VCG in~\cite{AggarwalBM19} proceeds at a high level as follows: For any query, let the opt-bidder be the one that is allocated the query in the optimal allocation. Now, either (case 1) the opt-bidder got allocated in the equilibrium -- in which case the optimal value is obtained for that query, or (case 2) another bidder was allocated -- in which case its spend on the query is at least the opt-bidder's bid, since that is a floor in the second price auction. Due to the ROI constraints, the total spend of an advertiser is also a lower bound on its contribution to liquid welfare, and therefore the spends in queries in case 2 help bound the welfare by the optimal welfare for those queries. Putting these two cases together, we get a ratio of 2 for VCG.

The trade-offs are different in $\randauction(\alpha, p)$ (for $\alpha > 1$). On the positive side, if the auction allocates to the incorrect bidder, and if its bid is greater than that of the opt-bidder by more than a factor of $\alpha$, then we see a gain in the attributions in case 2 above, since the spend is strictly greater than the opt-bidder's bid (see Def.~\ref{def:rand}). However, this also comes with potential losses in attribution: When the opt-bidder is the highest but the second-highest bid is close to its bid then we only select the opt-bidder w.p. $1-p < 1$, and the spend is also lower than the second-price auction. A similar dynamic plays out when the opt-bidder is not the highest, but is close to the highest, in which case you allocate to the opt-bidder w.p. $p$, but the spend is again lower than in the second-price auction. 

With such opposing cases, one would expect to gain in the aggregate only if $\alpha$ and $p$ are well-tuned to the underlying values -- e.g., in Bayesian setting when the value distributions are known, or perhaps even in the prior-independent setting when there are underlying regular or MHR type distributions. Surprisingly, we show that for a range of $\alpha$ and $p$, the tradeoffs \emph{always} work out in our favor, for any fixed set of values, i.e., in a prior-free manner.

\subsection{Bounding the PoA}

\begin{theorem}
\label{thm:poa}
For $p = \tfrac{2}{5}$, and $\alpha \in [1, \frac{1 + \sqrt{85}}{6}]$, the PoA of $\randauction(\alpha, p)$ (defined for two bidders) is at most $\frac{2\alpha + 6 + \frac{2}{\alpha}}{2\alpha + 1 + \frac{2}{\alpha}}$.
%The expression is minimized by maximizing $\alpha$.
In particular, for $\alpha^*=\frac{1 + \sqrt{85}}{6} \simeq 1.703$, the PoA of $\randauction(\alpha^*, \frac{2}{5})$ is at most a value $\simeq 1.896$.
\end{theorem}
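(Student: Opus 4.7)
The plan is a per-query analysis that extends the VCG PoA argument of~\cite{AggarwalBM19} to the randomized auction $\randauction(\alpha,p)$. Fix an equilibrium with bid multipliers $\mu^*$ and $\mu^o$; for each query $j$ let $\optbidder$ be the OPT winner (so by OPT's choice $V_j := T(\optbidder) v_{\optbidder j} \ctr_{\optbidder j} \geq T(\otherbidder) v_{\otherbidder j} \ctr_{\otherbidder j}$), and write $b^* := \mu^* v_{\optbidder j}$, $b^o := \mu^o v_{\otherbidder j}$ for the per-click bids. Let $\sigma^*_j, \sigma^o_j$ denote the expected equilibrium spends of the two bidders on query $j$, and define
\[
\ell_j := T(\optbidder) v_{\optbidder j}\ctr_{\optbidder j}\, x^*_j + T(\otherbidder) v_{\otherbidder j}\ctr_{\otherbidder j}\, x^o_j, \qquad s_j := \sigma^*_j + \sigma^o_j.
\]
Applying (\ref{eq:spend-leq-a-lw}) to both bidders yields $\sum_j s_j \leq \sum_j \ell_j = \cw(\text{eq})$, so for every $\lambda \in [0,1]$ the hybrid bound $\cw(\text{eq}) \geq \lambda\sum_j \ell_j + (1-\lambda)\sum_j s_j$ holds. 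Hence it suffices to establish the per-query inequality
\[
\lambda\, \ell_j + (1-\lambda)\, s_j \;\geq\; V_j / R, \qquad R := \frac{2\alpha+6+2/\alpha}{2\alpha+1+2/\alpha},
\]
and then sum over $j$.

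I case-split each query according to the bid ratio $\rho_j := b^o/b^*$ into the four regimes of $\randauction$: (i) $\rho_j \leq 1/\alpha$ (opt wins w.p.\ $1$); (ii) $1/\alpha < \rho_j \leq 1$ (opt wins w.p.\ $1-p$); (iii) $1 < \rho_j \leq \alpha$ (opt wins w.p.\ $p$); (iv) $\rho_j > \alpha$ (opt loses). In each regime the allocation probabilities and truthful per-click prices are given by Definition~\ref{def:rand}, so $\ell_j$ and $s_j$ become closed-form expressions in $V_j$, the OPT-value ratio $\tau_j := T(\otherbidder) v_{\otherbidder j}\ctr_{\otherbidder j}/V_j \in [0,1]$, $\rho_j$, and the normalized multipliers $\mu^*/T(\optbidder)$, $\mu^o/T(\otherbidder)$. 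The qualitative picture (as previewed in the intuition subsection) is that regime (iv) yields a strict gain over VCG --- the winning other bidder pays $b^*\bigl(p/\alpha + 1 - 2p + p\alpha\bigr) > b^*$ per click when $\alpha>1$ --- whereas regimes (ii) and (iii) produce losses, since opt-bidder wins only probabilistically and at truthful prices strictly below the VCG second price. The proof amounts to showing that with $p = \tfrac{2}{5}$ and a suitable $\lambda$, the regime-(iv) gain aggregated across queries always dominates the regime-(ii,iii) loss, uniformly in the remaining parameters.

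The main technical step is to close the bound despite $V_j$ carrying $T(\optbidder)$ while the bids only carry $\mu^*$. The bridge is complementary slackness for LP~(\ref{lp:tcpa}): since each $\mu_i = 1 + 1/\gamma_i \geq 1$, whenever $\mu^* > 1$ we have $\gamma^* > 0$ and opt-bidder's ROI constraint must be tight, so $\sum_j \sigma^*_j = T(\optbidder)\sum_j v_{\optbidder j}\ctr_{\optbidder j}\, x^*_j$, and symmetrically for the other bidder. Splitting on whether $\mu^*$ equals $1$ or is strictly greater (and analogously for $\mu^o$), and combining with the bid comparison defining the current regime, reduces each per-query inequality to a purely algebraic statement in $\alpha, p, \lambda, \tau_j, \rho_j$. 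Substituting $p = 2/5$ and optimizing $\lambda$ jointly with $\alpha$ collapses the four regime constraints into a single univariate bound whose tight envelope is exactly $R$; the feasibility range $\alpha \in [1,(1+\sqrt{85})/6]$ arises from the discriminant of the binding subcase (indeed $\alpha^*$ is a root of $3\alpha^2 - \alpha - 7 = 0$, consistent with this).

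I expect the main obstacle to be the uniform per-query inequality in regimes (ii) and (iii), where both the $\ell_j$-share of $V_j$ and the per-click spend are simultaneously weakened relative to VCG; the delicate cancellation that keeps the bound intact is precisely what pins down the choices $p=\tfrac{2}{5}$ and $\alpha^* = (1+\sqrt{85})/6$, and the upper bound $\alpha \leq \alpha^*$ in the theorem is exactly where this cancellation breaks down. The tight instance of Theorem~\ref{thm:poa-tight-example}, which should live at one of the regime boundaries $\rho_j = 1$ or $\rho_j = \alpha$, serves as a sanity check that the per-query inequality is not overshooting and that $R$ is the right bound.
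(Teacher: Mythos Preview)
Your high-level architecture matches the paper's: the hybrid bound $\cw(\text{eq}) \geq \lambda \sum_j \ell_j + (1-\lambda) \sum_j s_j$ with an optimized $\lambda \in [0,1]$ is precisely the factor-revealing LP the paper writes, with your $\lambda$ and $1-\lambda$ playing the roles of the dual variables $\gamma$ and $\beta$ (the paper's feasible dual has $\gamma + \beta = 1$). The four regimes are the paper's partition $Q_1$--$Q_4$, and you correctly identify that the binding subcase produces $3\alpha^2 - \alpha - 7 = 0$.

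The gap is your ``main technical step''. The bridge between $V_j$ and the bids is not via complementary slackness; it is direct. The optimal bidding formula already gives the opt-bidder's per-click bid at least $T(\optbidder)\, v_{\optbidder j}$ on every query --- this is exactly inequality~(\ref{eq:bid_ge_opt}) in the paper: the multiplier that is at least $1$ is the multiplier on $T\cdot v$, not on $v$ alone, so writing $b^* = \mu^* v_{\optbidder j}$ with $\mu^* = 1 + 1/\gamma^*$ is what manufactures the difficulty. Once $b^* \geq T(\optbidder) v_{\optbidder j}$ is in hand, the per-query spend bound $s_j \geq s_k V_j$ in each regime follows immediately from the pricing rule of Definition~\ref{def:rand}, and the rest is a four-line dual-feasibility check. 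By contrast, complementary slackness gives only the \emph{global} identity $\sum_j \sigma^*_j = T(\optbidder)\sum_j v_{\optbidder j} \ctr_{\optbidder j} x^*_j$, which cannot be localized to ``reduce each per-query inequality to a purely algebraic statement'' as you claim; indeed when both ROI constraints are tight one has $\sum_j s_j = \sum_j \ell_j = \cw(\text{eq})$, so the hybrid bound degenerates to a tautology and carries no comparison to $\sum_j V_j$. (A smaller point: with $\mu = 1 + 1/\gamma$, the condition $\gamma > 0$ is equivalent to $\mu < \infty$, not to $\mu > 1$.)

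Finally, tracking $\tau_j$ and both normalized multipliers is unnecessary: the paper simply drops the other bidder's term in $\ell_j$ and uses $\ell_j \geq m_k V_j$, which already matches the tight instance of Theorem~\ref{thm:poa-tight-example}.
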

\begin{proof}
Consider any instance $\mathcal{I}$ consisting of two bidders, and a set of queries $Q$. Bidder $i \in \{0,1\}$ has a tCPA of $T(i)$. The value of bidder $i$'s ad on query $j \in Q$ is $v(i, j)$. For ease of exposition, we will assume that all click-through rates are 1.

Fix an optimal allocation \opt, and any equilibrium allocation \eq. For $j\in Q$, let $\optbidder(j)$ denote the bidder to which $\opt$ allocates $j$, and let $\otherbidder(j)$ denote the other bidder. Let $\algbidder(j)$ be the bidder to which the auction allocates $j$ in \eq.

For any $j\in Q$, define $OPT(j)$ as the contribution of $j$ to the liquid welfare in the optimal allocation, i.e. (from Def.~\ref{def:lw}),
$$OPT(j) = T(i^*(j)) \cdot v(i^*(j), j)$$
Similarly, for any subset $Q' \subseteq Q$, define $OPT(Q')$ as the contribution of $Q'$ to the liquid welfare in $OPT$:
$$OPT(Q') = \sum_{j \in Q'} OPT(j)$$

Firstly, note that, for all $j \in Q$, using the optimal bidding formula (Eq.\ref{eq:bid-mult}), and the fact that the optimal bid multiplier for each bidder is at least 1, we get
\begin{align}
 \optbj &= \mu(\optbidder)\cdot T(\optbidder(j))\cdot v(\optbidder(j), j) \notag\\
 &\geq T(\optbidder(j))\cdot v(\optbidder(j), j) \notag\\
 &= \opt (j)   \label{eq:bid_ge_opt}
\end{align}

{\bf Query partition: } We partition $Q$ into four parts based on the relative values of the bids of $\optbidder(j)$ and $\otherbidder(j)$. For a query $j$ in each part, we will compute, using the definition (\ref{def:rand}) of $\randauction(\alpha, p)$:\\
\ \ (a) the probability that $j$ is allocated to $\optbidder(j)$, and\\ \ \ (b) the expected spend on query $j$.%, using the definition (Def.~\ref{def:rand}) of $\randauction(\alpha, p)$.

\begin{enumerate}
\item Let $Q_1$ be the set of queries $j$ such that $$\optbj \leq  \frac{\otherbj}{\alpha}$$
For each $j \in Q_1$, $\randauction(\alpha, p)$ allocates $j$ to $\otherbidder(j)$ w.p. 1:
$$Pr[\algbidder(j) = \optbidder(j)] = 0$$
Further, the expected spend on $j$ is 
\begin{align*}
E[\spend(j)] &= \left(p\cdot \alpha + (1-2p) + \frac{p}{\alpha}\right)\cdot \optbj \\
&\geq~ \left(p\cdot \alpha + (1-2p) + \frac{p}{\alpha}\right)\cdot\opt(j)\ \ \  \mbox{(using Eq.\ref{eq:bid_ge_opt})}  
\end{align*}

\item Let $Q_2$ be the set of queries $j$ such that $$\frac{\otherbj}{\alpha} \leq \optbj \leq \otherbj$$
For each $j \in Q_2$: $$Pr[\algbidder(j) = \optbidder(j)] = p$$
The total expected spend on $j$ is the sum of the expected spends for the two bidders which are:
\begin{align*}
E[\spend(\otherbidder(j), j] &= p \cdot\frac{\optbj}{\alpha} + (1-2p)\cdot \optbj,\\
E[\spend(\optbidder(j), j)] &= p\cdot  \frac{\otherbj}{\alpha}~ \geq~ p\cdot  \frac{\optbj}{\alpha},
\end{align*}
where the last inequality is because $\otherbj \geq \optbj$ in this case.
Thus, using Eq.\ref{eq:bid_ge_opt}, we get 
\begin{align*}
E[\spend(j)] \geq \left(\frac{2p}{\alpha} + (1-2p)\right)\cdot \opt(j)
\end{align*}

\item Let $Q_3$ be the set of queries $j$ such that $$\otherbj \leq \optbj \leq \alpha \cdot \otherbj$$
Then for each $j \in Q_3$: $$Pr[\algbidder(j) = \optbidder(j)] = 1-p$$
The total expected spend is the sum of spends of the two bidders which are:
$$E[\spend(\otherbidder(j), j] = p\cdot \frac{\optbj}{\alpha}$$
and
\begin{align*}
E[\spend(\optbidder(j), j)] &=  \left(\frac{p\cdot\otherbj}{\alpha} + (1-2p)\cdot\otherbj\right)\\
&\geq \left(\frac{p\cdot\optbj}{\alpha^2} + (1-2p)\cdot\frac{\optbj}{\alpha}\right)
%&\geq \frac{1}{3} \left(\frac{\optbj}{\alpha^2} + \frac{\optbj}{\alpha}\right)
\end{align*}
where the last inequality is because $\otherbj \geq \frac{\optbj}{\alpha}$ in this case.
Thus, using Eq.\ref{eq:bid_ge_opt}, we get 
\begin{align*}
E[\spend(j)] ~\geq~ \left(\frac{1-p}{\alpha} + \frac{p}{\alpha^2}\right)\cdot \opt(j)
\end{align*}

\item Let $Q_4$ be the set of queries $j$ such that $$\optbj \geq  \alpha\cdot \otherbj$$
Then for each $j \in Q_2$: $$Pr[\algbidder(j) = \optbidder(j)] = 1$$ while the expected spend can be 0 (if $\otherbj = 0$). 

\end{enumerate}
\vspace{1em}
\noindent{\bf Bounding the welfare:} 
Define the following constants:  
\begin{align*}
    &m_1 = 0 \ \ \ \ \ \ \ \ \ \ \  &s_1 &= p\cdot \alpha + (1-2p) + \frac{p}{\alpha}\\
    &m_2 = p      &s_2 &= \frac{2p}{\alpha} + (1-2p)\\
    &m_3 = 1-p      &s_3 &=  \frac{1-p}{\alpha} + \frac{p}{\alpha^2}\\
    &m_4 = 1        &s_4 &= 0
\end{align*}
From the calculations above, we have:
\begin{align}
    \forall k = 1..4&: \forall j \in Q_k, ~Pr[\algbidder(j) = \optbidder(j)] = m_k \label{eq:m-k-def}\\
    \forall k = 1..4&: \forall j \in Q_k, ~\spend(j) \geq s_k\cdot \opt(j)\label{eq:s-k-def}
\end{align}

Now, for each case $k = 1\ldots 4$, we get the following two bounds on the welfare obtained in equilibrium.

Firstly, considering the probability that $j$ is allocated to the optimal bidder $i^*$ in the equilibrium, we get, using Eq. \ref{eq:m-k-def}:
\begin{align}
    E[\cw(Eq)] &\geq \sum_{k=1}^4 \sum_{j\in Q_k}Pr\big[\algbidder(j) = \optbidder(j)\big]\cdot \opt(j)\notag\\
    &\geq \sum_{k=1}^4 m_k \cdot \opt(Q_k)     \label{eq:mi-bound}
\end{align}
Secondly, 
\begin{alignat}{2}
    E[\cw(Eq)] & = \sum_{i \in A} E[\cw(i, Eq)] &&\geq \sum_{i\in A} spend(i, Eq) \notag\\
    &=\sum_{k=1}^4 \sum_{j\in Q_k}E[\spend(j)] &&\geq \sum_{k=1}^4\sum_{j\in Q_k} s_k \cdot\opt(j)\notag\\
    &= \sum_{k=1}^4 s_k \opt(Q_k)  &&  \label{eq:si-bound}
\end{alignat}
where the first inequality follows from using Eq.~\ref{eq:spend-leq-a-lw} and the second inequality follows from Eq.~\ref{eq:s-k-def}.\medskip

\noindent{\bf Bounding the PoA via a factor revealing LP. } Define the variables $x_k := \opt(Q_k)$, for $k = 1\ldots 4$. Normalizing $OPT(Q) = 1$, we get 
\begin{equation}
    \label{eq:sumxk}
    \sum_{k} x_k = 1
\end{equation}

We can now bound the overall price of anarchy by minimizing $E[\cw(Q)]$ over the constraints \eqref{eq:mi-bound}, \eqref{eq:si-bound}, \eqref{eq:sumxk}. Let $z$ denote $E[\cw(Q)]$, then we have the following linear program and its dual LP (with variables $\gamma, \beta, \delta$):

\begin{minipage}{0.5\textwidth}
\begin{align*}
    \mbox{Minimize } z \\
    s.t.~~ z - \sum_k m_k x_k &\geq 0 \\
    z - \sum_k s_k x_k &\geq 0 \\
    \sum_k x_k &\geq 1 \\
    \forall k, x_k &\geq 0\\
\end{align*}
\end{minipage}
\vline\hfill
\begin{minipage}{0.5\textwidth}
\begin{align*}
    \mbox{Maximize  } \delta \\
    s.t.~~\forall k: ~ -m_k \gamma - s_k \beta + \delta~
    &\leq~ 0 \\
    \gamma + \beta &\leq 1 \\
    \gamma, \beta, \delta &\geq 0\\ \\ \\
\end{align*}
\end{minipage}

Consider the following dual solution:
\begin{equation}
    \label{eq:dual-feasible}
    \delta = \frac{1}{\left(1 + \frac{1}{s_1}\right)}, \ \ \ \gamma = \delta, \ \ \ \beta = \frac{\delta}{s_1}
\end{equation}
We show that this is a feasible solution for the choice of\footnote{This choice of $p=\tfrac{2}{5}$ is found by a search and is almost optimal. We note that an arguably more natural choice of $p=\tfrac{1}{3}$ (which gives an even increase in allocation probability as the bid increases) is only slightly worse, giving a PoA of approx. 1.91.} $p = \tfrac{2}{5}$ and a range of $\alpha$. For $k = 1$ and $k = 4$, the constraints in the dual LP are satisfied and in fact tight, since $m_1 =0, m_4 = 1,$ and $s_4 = 0$. Note also that 
$$\gamma + \beta = \delta \left(1 + \frac{1}{s_1}\right) = 1$$
so the last constraint is satisfied and tight as well.

For $k = 2$, we need:
$-m_2\gamma - s_2 \beta + \delta \leq 0$.
Plugging in the choice of $p=\tfrac{2}{5}$ and the values of $m_2, s_2, \gamma, \beta, \delta$, this reduces to:
$$3\alpha^2 - \alpha - 7 \leq 0,$$
which is satisfied for $\alpha \in [1, \frac{1 + \sqrt{85}}{6}]$.  Note that $\frac{1 + \sqrt{85}}{6} \simeq 1.703$.

For $k = 3$, we need:
$-m_3\gamma - s_3 \beta + \delta \leq 0,$
which reduces to:
$$4\alpha^3 + 2\alpha^2 - 11\alpha - 10 \leq 0 $$
This is satisfied for $\alpha \in [1, r]$, with $r \simeq 1.8$.%1.7998

Thus we see that the solution \eqref{eq:dual-feasible} is a feasible solution for the dual LP, for $\alpha \in [1, \frac{1 + \sqrt{85}}{6}]$.

The value of the dual objective is 
$\delta = \frac{1}{\left(1 + \frac{1}{s_1}\right)} = \frac{2\alpha + 1 + \frac{2}{\alpha}}{2\alpha + 6 + \frac{2}{\alpha}}$. This means that the minimum value of the primal LP, which is a lower bound on the value obtained in the auction equilibrium, is no less than this value. Since $\opt$ is normalized to 1, this completes the proof the theorem. For $\alpha = \frac{1 + \sqrt{85}}{6}$, the value of the dual objective is approx. $0.527$, i.e., a price of anarchy of approx. $1.896$. 

For completeness, we note that there is a primal solution with the same value as this dual solution: $x_4 = z = \delta$, $x_1 = \frac{\delta}{s_1}$, and $x_2 = x_3 = 0$.
\end{proof}

\subsection{A tight example}
One may notice that the bound obtained above does not fully use the properties of the equilibrium -- it does use the fact that tCPA constraints are satisfied (in Eq.~\ref{eq:si-bound}), but does not use the property that no bidder has an incentive to change its bid-multiplier. So we may expect the bound to be very loose. Surprisingly, we show that there exists an instance in which the bound is tight, and the values of $x_k := OPT(Q_k)$ in an equilibrium in the instance are precisely those corresponding to an optimal LP solution in the proof above.

\begin{figure}[ht]
	\centering
  		\includegraphics[width=0.4\textwidth]{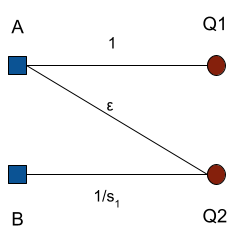}
  		\caption{A tight example.}
  		\label{fig:tight-example}
\end{figure}

\begin{theorem}
\label{thm:poa-tight-example}
The analysis in Theorem~\ref{thm:poa} is tight, i.e., there exists an instance with an equilibrium under $\randauction(\alpha, p)$, in which the ratio of the optimal welfare to the welfare in the equilibrium is exactly $1+\tfrac{1}{s_1}$, where $s_1 = p\cdot \alpha + (1-2p) + \frac{p}{\alpha}$. In particular, for $p=\tfrac{2}{5}$, the ratio is exactly $ \frac{2\alpha + 6 + \frac{2}{\alpha}}{2\alpha + 1 + \frac{2}{\alpha}}$.
\end{theorem}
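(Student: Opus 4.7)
The plan is to realize the primal-optimal LP solution from the proof of Theorem~\ref{thm:poa} as an explicit two-bidder instance. Since that primal puts all mass on $Q_4$ and $Q_1$ in the ratio $x_4:x_1 = s_1:1$ (with $x_2=x_3=0$), I would construct an instance with two queries: an ``easy'' query $j_0$ where bidder~1 is unambiguously the opt-bidder and wins in equilibrium (playing the role of $Q_4$), and a ``hard'' query $j^*$ where bidder~2 is the opt-bidder in OPT but bidder~1 wins in equilibrium by inflating its bid above the $\alpha$-threshold (playing the role of $Q_1$). Concretely, take $T(1)=T(2)=1$, $v(1,j_0)=1$, $v(2,j_0)=0$, and $v(1,j^*)=\epsilon$, $v(2,j^*)=V:=(1+\epsilon)/s_1$ for a small parameter $\epsilon>0$. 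The optimal allocation sends $j_0$ to bidder~1 and $j^*$ to bidder~2, so $\cw(\opt)=1+V=1+(1+\epsilon)/s_1$.

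For the equilibrium I would propose bidder~2 using $\mu_2=1$ and bidder~1 using any multiplier $\mu_1$ strictly greater than $\alpha V/\epsilon$. On $j_0$, bidder~1's bid beats bidder~2's zero bid and wins w.p.~$1$ at zero cost. On $j^*$, bidder~1's bid $\mu_1\epsilon$ strictly exceeds $\alpha V$, so by Def.~\ref{def:rand} bidder~1 wins w.p.~$1$ at cost $s_1 V$. Bidder~1's total spend equals $s_1 V = 1+\epsilon$, which exactly matches its tCPA budget $T(1)(1+\epsilon)$, so tCPA holds with equality. Therefore $\cw(\mathrm{eq})=1+\epsilon$, and
$$\frac{\cw(\opt)}{\cw(\mathrm{eq})} \;=\; \frac{1+(1+\epsilon)/s_1}{1+\epsilon} \;\longrightarrow\; 1 + \frac{1}{s_1} \quad \text{as } \epsilon\to 0,$$
matching the Theorem~\ref{thm:poa} bound; the supremum over this family of instances is exactly $1+1/s_1$, which for $p=\tfrac{2}{5}$ simplifies to $(2\alpha+6+2/\alpha)/(2\alpha+1+2/\alpha)$.

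The main obstacle is verifying the equilibrium condition, which reduces to ruling out bidder~2's deviations (bidder~1 already wins every query, and decreasing $\mu_1$ only loses value). Because $v(2,j_0)=0$, any value-gaining deviation by bidder~2 must start winning $j^*$, so bidder~2's possible new regimes are winning $j^*$ w.p.~$p$, $1-p$, or $1$, with per-win costs $\mu_1\epsilon/\alpha$, $\mu_1\epsilon(p/\alpha+1-2p)/(1-p)$, and $s_1\mu_1\epsilon$ respectively (by Def.~\ref{def:rand} applied to bidder~2 as the low bidder against bidder~1's bid $\mu_1\epsilon$). I would check that each of these per-win costs strictly exceeds $T(2)\cdot V = V$, which violates bidder~2's tCPA in that regime: the $p$-win inequality reduces to $\mu_1\epsilon>\alpha V$, true by the choice of $\mu_1$; the $(1-p)$-win inequality reduces to $\alpha(1-2p)>1-2p$, true for $\alpha>1$ since $1-2p>0$ when $p=\tfrac{2}{5}$; and the $1$-win inequality reduces to $s_1\alpha>1$, which for $p=\tfrac{2}{5}$ factors as $(2\alpha+3)(\alpha-1)/5>0$ and holds for $\alpha>1$. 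Hence no positive-win deviation of bidder~2 is tCPA-feasible, establishing that the proposed profile is a $(0,0)$-equilibrium.
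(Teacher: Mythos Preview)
Your construction is correct and essentially identical to the paper's own proof: the paper uses the same two-query instance (one ``free'' query for bidder~1 and one contested query where bidder~2 is the opt-bidder), the same multipliers $\mu_2=1$ and $\mu_1$ just large enough to push bidder~1's bid on the contested query above the $\alpha$-threshold, and the same three-case check that any positive-probability deviation by bidder~2 violates its tCPA constraint. The only cosmetic difference is that the paper sets $v(2,j^*)=1/s_1$ (so bidder~1's tCPA holds with slack $\epsilon$) while you set $v(2,j^*)=(1+\epsilon)/s_1$ (so it holds with equality); both yield the ratio $1+1/s_1$ in the limit $\epsilon\to 0$.
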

\begin{proof}
Consider the instance in Fig.~\ref{fig:tight-example}. This instance is similar to the one used in~\cite{AggarwalBM19} except for the value on edge $(B, 2)$ (reduced from 1 to $1/s_1$).
We have two advertisers  $a, b$ and two queries $1,2$. The values for advertiser $a$ are $v_1^{a}=1$ and $v_2^{a}=\epsilon$ and for advertiser $b$ are $v_1^{b}=0$ and $v_2^{b}=\frac{1}{s_1}$. The click-through rates for all ad-query pairs are set to 1. For both the advertisers, we have a tCPA constraint with both $tCPA = 1$, so the constraint is that the spend should be at most the value.

Consider the bid multipliers: 
\begin{equation}
    \label{eq:tight-example-bms}
    \mu_A = \frac{\alpha}{\epsilon s_1} + 1, ~~\mu_B = 1
\end{equation}
We will first show that this is an equilibrium:

\begin{itemize}
\item For item 2, $A$'s bid is $\frac{\alpha}{s_1} + \epsilon$, while $B$'s bid is $\frac{1}{s_1}$, so $A$ wins item 2 with probability 1, and pays $s_1$ times $B$'s bid, equal to 1. $A$ also gets item 1 for free since $B$'s bid is 0. Thus $A$'s spend is $1$, which is less than its value, $1 + \epsilon$, for the two items obtained. Thus $A$'s tCPA constraint is satisfied, and since it gets both the items w.p. 1, it has no incentive to deviate from its bid of $\mu_A$.

\item $B$ gets nothing, but it can not bid up (increase $\mu_B$) to get item 2 with any probability and still satisfy its tCPA constraint: 

\begin{itemize}
\item If it bids very high to obtain item 2 with probability 1, then it has to pay $s_1$ times $A$'s bid which becomes $\alpha + \epsilon s_1$ which is more than the value of $\frac{1}{s_1}$ since $\alpha \geq 1$ and $s_1 \geq 1$. 

\item If it bids up to win the item with probability $1-p$, then its spend is 
$bid(A)\left( \tfrac{p}{\alpha} + (1-2p) \right) > \tfrac{\alpha}{s_1}\left( \tfrac{p}{\alpha} + (1-2p) \right)$. But this is at least the expected value for $B$ which is $\frac{1-p}{s_1}$ (since $\alpha \geq 1)$.

\item If $B$ bids up to win item 2 with probability $p$, then its spend is 
$p\cdot \tfrac{bid(A)}{\alpha} > \tfrac{p}{s_1}$, which is the expected value for $B$.
\end{itemize}
Thus $B$ does not want to defect from its bid of $\mu_B = 1$.
\end{itemize}
Thus the bid multipliers (\ref{eq:tight-example-bms}) are in equilibrium. In this equilibrium, the total value obtained is $1 + \epsilon$, while the optimal allocation (item 1 to $A$, item 2 to $B$) gets a value of $1 + \frac{1}{s_1}$, thus proving the theorem, as $\epsilon \rightarrow 0$.
\end{proof}

\section{An impossibility result with many bidders}
\label{sec:many-bidders}
In this section we will show an impossibility result when we have a large number of bidders bidding per query.

\subsection{Intuition}
Suppose we try to extend the proof in Sec.~\ref{sec:poa-2-bidders} to the case of many bidders per query (for some appropriate generalization of $\randauction$ to more than two bidders). The problem that arises is that the contributions to the welfare from queries in classes $Q_2$ and $Q_3$  can become very small. To see this, note that a query in type $Q_2$ or $Q_3$ has the opt-bidder among the highest bidders within some margin ($\alpha$). Now, the probability of the opt-bidder winning can be very low if there are a large number of bidders within the margin, since the anonymous auction has to fairly randomize between all of them. Thus, $m_2$ and $m_3$ can be close to 0. %The spends continues to be lower in $Q_2$ and $Q_3$ compared to the second-price auction. Thus for such queries there is a deterioration of the gains. A query in $Q_4$ means that the auction picked the opt-bidder, which is still good. 
Now the bad scenario that could occur is that there is an instance and an equilibrium in which there are no queries in $Q_1$ ($x_1 = 0$), and a mix of queries, half in $Q_4$ (auction picked the opt-bidder) and half in $Q_2$ or $Q_3$. If such a scenario arises in equilibrium then the welfare in any such generalization of $\randauction$ would be no better than a half of the optimal welfare.

In fact, we show that such a scenario can be constructed in equilibrium for \emph{any} randomized auction. We do so by creating an instance for any given randomized auction using the intuition for $\randauction$ above. Queries will either be in $Q_4$ (i.e., the opt-bidder
wins without any competition), or will be such that there are a large number of bidders with almost equal bids, and the opt-bidder is the lowest among them. In that case the opt-bidder has a very low chance of winning, but the overall spend among the other bidders is also not high enough to increase the welfare. The technical difficulty here is that the space of randomized auctions is very large, so we have to construct an instance that relies on very basic auction properties so that every claim holds for all auctions.

\subsection{Anonymity and Max-Threshold}
\label{sec:1cs}
We start by defining two auction properties.
\begin{definition}
\label{def:anonymous}
A randomized auction is said to be \emph{anonymous} if its allocation does not depend on the identity of the bidders but only on the relative bid values. 
\end{definition}
Our proof of the impossibility result holds only for anonymous auctions.
Anonymity is often standard when deriving general results; note that anonymity precludes some known auctions such as those with personalized reserves (but these are not very applicable in the prior-free setting anyway). We will use the following:
\begin{lemma}
\label{lem:fairness-fraction}
In any anonymous auction, if there are $k$ bidders, 
then the lowest bidder wins the item with probability at most $\frac{1}{k}$.
\end{lemma}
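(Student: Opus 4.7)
The plan is to combine anonymity with the monotonicity of each bidder's winning probability in its own bid (implicit in a truthful auction), together with the constraint that at most one item is allocated so $\sum_i x_i \le 1$.

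First, I would anchor the argument at the symmetric reference profile where all $k$ bidders submit the same bid. By anonymity, every bidder wins with equal probability in this profile, and the sum constraint forces this common value to be at most $\tfrac{1}{k}$.

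For a general profile with sorted bids $b_1 \le b_2 \le \ldots \le b_k$, let $p_i$ be the winning probability of the bidder in sorted position $i$ (well-defined by anonymity, since the allocation depends only on the multiset of bids and tied bidders must share probability by symmetry). I plan to establish the position-monotonicity $p_1 \le p_2 \le \ldots \le p_k$: given this, summing and invoking $\sum_i p_i \le 1$ yields $k \cdot p_1 \le 1$, i.e., $p_1 \le \tfrac{1}{k}$, as claimed.

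The main obstacle is bridging own-bid monotonicity (which compares a single bidder at two different bids with others held fixed) to position-monotonicity (a cross-bidder comparison in a single profile). I would do this by considering the bidder at position $i$ raising its bid to meet the bidder at position $i+1$: anonymity then forces these two now-tied bidders to share a common winning probability, while own-bid monotonicity of the raising bidder says this common value is at least its original probability $p_i$. A right-continuity/limiting argument across the tied configuration then identifies (or bounds above) this common value in terms of $p_{i+1}$ in the original multiset, yielding $p_i \le p_{i+1}$. Carefully tracking how the multiset shifts under the hypothetical deviation, and how the tie-breaking shares of probability interact with the sum-to-one constraint, is the delicate book-keeping that makes this step non-trivial.
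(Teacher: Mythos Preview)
Your proposal is correct and follows essentially the same route as the paper: establish that anonymity together with the own-bid monotonicity of a truthful allocation yields ``fairness'' (a higher bidder cannot receive a lower winning probability), and then combine fairness with $\sum_i p_i \le 1$ to force the lowest bidder's share to be at most $1/k$. The paper's proof is in fact just a two-line assertion of these implications, whereas you go further by correctly isolating the bridging step from own-bid monotonicity to cross-bidder position-monotonicity as the only nontrivial point.
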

\begin{proof}
Anonymity in a truthful auction (with a monotonic allocation function) implies fairness -- a higher bidder can not get a lower probability of allocation. Fairness implies the property required. 
\end{proof}

Next, since we want to prove the result for all (anonymous) randomized auctions, we need to identify a technical property of any auction that we will use in our construction. This is a technical requirement used to consider cases of auctions that do not sell the item fully to a bidder even if it is the only bidder and bids as high as needed; intuitively such auctions can only have worse efficiency.
\begin{definition}
\label{def:max-threshold}
Consider the setting when there is exactly one bidder for the item and puts a bid of $b$. As $b$ increases, the probability $P(b) = Pr[\mbox{win with bid }b]$ of winning the item must increase in a truthful auction. Define the \emph{max-probability}
$$\pi^* = \mathop{lim}_{b\rightarrow \infty} P(b)$$
be (the limit of) the highest probability with which the bidder can win. For simplicity of notation, we will assume that the limit is reached\footnote{For example, the standard assumption of consumer sovereignty implies that $\pi^* = 1$ and is achieved at a high enough bid.}. Define the \emph{max-threshold} as the lowest bid that achieves this highest probability.
$$M^* = min\{b: P(b) = \pi^*\}$$
\end{definition}
We will use the following:
\begin{lemma}
\label{lem:1cs-cost}
For any (randomized) truthful auction with a max-probability of $\pi^*$ and a max-threshold of $M^*$, the cost when there is a single bidder bidding $b \geq M^*$ is at most $\pi^* M^*$. % + \epsilon(b-M^*)$.
\end{lemma}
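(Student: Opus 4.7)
The plan is to invoke the Myerson payment identity specialized to a single-bidder truthful auction. For a truthful (randomized) auction with allocation probability $P(b)$ for a single bidder bidding $b$, the monotonicity of $P$ (already noted in Definition~\ref{def:max-threshold}) together with truthfulness forces the expected payment to take the form
\begin{equation*}
c(b) \;=\; b\cdot P(b) \;-\; \int_0^b P(t)\, dt.
\end{equation*}
I would first justify this formula: truthfulness means that a bidder with true value $v$ prefers reporting $v$ over any $b$, so $v P(v) - c(v) \geq v P(b) - c(b)$ for all $v,b$, and the standard Myerson argument (taking $b \to v$ from above and below) yields exactly this integral representation, with the boundary choice $c(0)=0$ being WLOG for a bidder with value 0.

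Next I would substitute $b \geq M^*$. By definition of $M^*$, we have $P(t) = \pi^*$ for every $t \geq M^*$, so the integral splits as
\begin{equation*}
\int_0^b P(t)\, dt \;=\; \int_0^{M^*} P(t)\, dt \;+\; (b - M^*)\,\pi^*.
\end{equation*}
Plugging this into the payment formula and using $P(b) = \pi^*$ gives
\begin{equation*}
c(b) \;=\; b\,\pi^* - (b - M^*)\pi^* - \int_0^{M^*} P(t)\,dt \;=\; \pi^* M^* - \int_0^{M^*} P(t)\,dt \;\leq\; \pi^* M^*,
\end{equation*}
where the last inequality uses $P(t) \geq 0$. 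This gives the claimed bound.

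The only subtle point, which I would address carefully, is the legitimacy of the Myerson integral formula in the randomized setting treated here: the paper has not explicitly developed this machinery, so I would either state it as a direct consequence of the single-parameter truthfulness characterization (noting monotonicity of $P$ is explicitly assumed in Definition~\ref{def:max-threshold}) or, alternatively, give a short direct argument by taking deviations $b' = M^*$ versus $b' = b$ of a hypothetical bidder with value $M^*$: truthfulness at value $M^*$ implies $M^*\pi^* - c(M^*) \geq M^*\pi^* - c(b)$, so $c(b) \geq c(M^*)$; and truthfulness at value $b$ combined with $P(b) = P(M^*) = \pi^*$ forces $c(b) = c(M^*)$. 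Then a second appeal to truthfulness at value $M^*$ versus a downward deviation bounds $c(M^*) \leq \pi^* M^*$, completing the argument without invoking the full integral formula. I would likely prefer the Myerson-formula route since it is cleaner, flagging it as the standard payment identity.
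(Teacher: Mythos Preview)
Your proposal is correct and takes essentially the same approach as the paper: both invoke Myerson's truthful pricing formula to bound the payment by the area of the $[0,M^*]\times[0,\pi^*]$ rectangle above the allocation curve. The paper's proof is a one-sentence appeal to this fact, while you spell out the integral computation explicitly (and also offer a clean deviation-based alternative); nothing substantively different is going on.
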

\begin{proof}
This follows directly from Myerson's truthful pricing formula, where one can see that if the max-threshold bid is $M^*$ and the max-probability is $\pi^*$, then the area above the allocation curve (at a bid at least $M^*$) is at most $\pi^* \cdot M^*$. The bound is tight when the auction is a second price auction with a reserve price equal to $M^*$ (and $\pi^* = 1$).
\end{proof}

\subsection{Proving the bound}
\begin{theorem}
\label{thm:poa-multiple-half}
For any randomized truthful anonymous auction, for any $\Delta, \gamma > 0$, there exists an instance with sufficiently many bidders $n$ for which there exists a $(\Delta, \gamma)$-equilibrium in which the total value is asymptotically $\frac{1}{2}$ of the optimal value for the instance, as $n \rightarrow \infty$. 
\end{theorem}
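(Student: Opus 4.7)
The plan is to fix any anonymous truthful auction $\mathcal{A}$ (with max-probability $\pi^*$ and max-threshold $M^*$ from Lemma~\ref{lem:1cs-cost}) and any $\Delta,\gamma>0$, and then exhibit a family of instances $\mathcal{I}_n$ admitting a $(\Delta,\gamma)$-equilibrium whose welfare tends to $\mathrm{OPT}/2$ as $n\to\infty$. The instance design follows the intuition of Sec.~\ref{sec:many-bidders}: queries split into ``free'' ones (where an opt-bidder wins uncontested) and ``contested'' ones (where many bidders tie with the opt-bidder strictly the lowest).

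Concretely I would take $n+1$ bidders $a,b_1,\dots,b_n$ and $2n$ queries: contested $q_1,\dots,q_n$ and free $f_1,\dots,f_n$. Bidder $a$ has value $1$ on each $q_j$ and $0$ on each $f_i$ with $T(a)=1$; each $b_i$ has value $V$ on $f_i$ (for $V$ chosen slightly above $M^*+(1+\gamma)/\pi^*$), value $\epsilon$ on every $q_j$, and $0$ elsewhere, with $T(b_i)=1$. The optimal allocation gives $a$ every $q_j$ and $b_i$ its own $f_i$, so $\mathrm{OPT}=n(1+V)$.

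The proposed equilibrium has each $b_i$ choose a multiplier $\mu_i$ so large that its contested bid $\mu_i\epsilon$ strictly exceeds $1+\gamma$. The slack permitting this aggressive bid comes from $b_i$ winning $f_i$ with probability $\pi^*$ at expected spend at most $\pi^*M^*$ by Lemma~\ref{lem:1cs-cost}; plugging this into $b_i$'s LP makes its tCPA constraint tight at the chosen $\mu_i$. Because the resulting $\mathrm{cpc}=\mu_i\epsilon$ on each $q_j$ exceeds $T(a)\cdot v(a,q_j)=1$ by more than $\gamma$, bidder $a$'s LP forces $x_{a,q_j}=0$, so $a$ wins no contested query. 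By anonymity (Lemma~\ref{lem:fairness-fraction}) the $n$ tied $b_i$'s share each $q_j$ symmetrically, so every contested win yields welfare only $\epsilon$. I then verify the equilibrium conditions: $a$'s deviation to bid higher incurs a per-query spend exceeding $1+\gamma$ times its value, violating tCPA by more than $\gamma$; each $b_i$'s deviation to raise $\mu_i$ violates its already-tight tCPA; and any deviation to lower one's bid strictly loses value (so cannot be a $\Delta$-gain). Computing $\mathrm{LW}(\text{eq})\approx n(\epsilon+\pi^*V)$, the ratio $\mathrm{OPT}/\mathrm{LW}(\text{eq})$ simplifies and converges to a value approaching $2$ in the limit $n\to\infty$ with $\gamma,\epsilon$ taken small.

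The main obstacle is that the construction and verification must be uniform over all anonymous auctions $\mathcal{A}$: the auction can enter only through the generic parameters $(\pi^*,M^*)$, and every deviation bound must rely solely on Myerson's truthful pricing together with Lemmas~\ref{lem:fairness-fraction}--\ref{lem:1cs-cost}, never on auction-specific structure like VCG's second-price rule. A further subtlety I must handle carefully is confirming that $b_i$'s LP optimum genuinely picks the claimed aggressive multiplier (with a positive dual variable on the tCPA constraint), so that the auto-bidding formula of Eq.~\ref{eq:bid-mult} actually justifies the proposed bids; this is what forces the tight coupling between $V$, the free-query spend bound $\pi^*M^*$, and the contested-query overbidding that drives the opt-bidder out.
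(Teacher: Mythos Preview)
Your overall architecture (free queries that subsidize aggressive over-bidding on contested queries, so that anonymity forces the opt-bidder to a tiny share) is the right one and matches the paper. However, two concrete steps do not close, and both stem from collapsing structure that the paper keeps separate.

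\textbf{The welfare ratio does not reach $2$.} With $a$'s contested value fixed at $1$ and $b_i$'s free value $V$, you get $\mathrm{OPT}=n(1+V)$ and $\cw(\eq)\approx n\pi^*V$, so the ratio is $(1+V)/(\pi^*V)$, independent of $n$. Your own tCPA bookkeeping forces roughly $\pi^*M^*+c\le(1+\gamma)\pi^*V$ with $c>1$ the contested bid, hence $V\gtrsim M^*+1/\pi^*$. For any auction with $M^*>0$ and $\pi^*=1$ (e.g.\ second-price with a reserve) this pins $V\gtrsim 2$ and the ratio is at most $3/2$, not $2$. The paper avoids this by scaling \emph{both} the contested opt-value and the free value with the same large parameter: the opt-bidder's contested value is $V$ and the aggressive bidder's free value is $\tfrac{aV}{\pi^*}$, with $V\to\infty$ so the fixed free-query cost $\le \pi^*M^*$ becomes negligible; then $a\to 1$ yields ratio $\tfrac{a/\pi^*+1}{a}\to 1+\tfrac{1}{\pi^*}\ge 2$.

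\textbf{Bidder $a$'s deviation gain does not vanish.} Because your single opt-bidder competes on all $n$ contested queries at once, the Myerson bound per query gives $p\le \tfrac{c}{(c-1-\gamma)(n+1)}$, but summing over $n$ queries the achievable value $np\approx\tfrac{c}{c-1-\gamma}$ is a constant in $n$. For small $\Delta$ this exceeds $\Delta$ no matter how large $n$ is, so the $(\Delta,\gamma)$-equilibrium condition fails. The paper instead uses $k$ distinct opt-bidders $B_i$, one per contested query, so each $B_i$'s gain is $O(V/k)\to 0$. A related detail: the paper perturbs the aggressive bidders' contested values by a rotational tuple $\tau^i$ to break ties and obtain the clean bound $\sum_j\Pr[Q_j\to A_i]\le 1$; your identical-$\epsilon$ values create $n$-way ties on every $q_j$.

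Finally, your last worry is misplaced: Definition~\ref{def:equilibrium} does not require the multipliers to arise from the dual of LP~(\ref{lp:tcpa}); you only need to verify the tCPA inequality (up to $\gamma$) and the no-$\Delta$-improvement condition directly.
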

\begin{proof}

Let the given the randomized auction $\mathcal{A}$ %let $\mathcal{A}$ be the allocation function. L
%pick $\epsilon_0>0$ to be as small as required, and 
have max-probability $\pi^*$ and max-threshold $M^*$. For simplicity of notation going forward, if $M> 1$ then we re-scale all values in the following construction wlog so that $M = 1$. Note that we can assume $\pi^* \geq \frac{1}{2}$, otherwise we would be done, by taking an instance with one bidder and one item.\medskip

{\bf Constructing the instance. } Consider the following instance. 
There are $2k$ bidders in total: $k$ bidders $A_0,\ldots, A_{k-1}$, and $k$ bidders $B_0,\ldots, B_{k-1}$. All bidders have a tCPA of 1, i.e., they are constrained to have their total cost to be at most the total value achieved (up to a factor of $1+\gamma$). Correspondingly, there are $2k$ queries: $k$ queries $P_0,..,P_{k-1}$ and $k$ queries $Q_0, .., Q_{k-1}$. We pick $a \geq 1, V \geq 1$ to be fixed later.
\paragraph{Values for $P_i$:} For each $i \in [0, k-1]$, only $A_i$ has a non-zero value for $P_i$, with $v(A_i, P_i) = \frac{a\cdot V}{\pi^*}$.% for $a \geq 1$ to be chosen later.% (and $v(X, P_i) = 0, \forall X \neq A_i$).
\paragraph{Values for $Q_i$:} The values of the bidders for the $Q_i$ are in the following pattern. Firstly, for each $i \in [0, k-1]$, $v(B_i, Q_i) = V$, and $v(B_j, Q_i) = 0, \forall j \neq i$. Next, we  define the $k$-long tuple 
$$\tau^0 = (a\cdot V+\rho,\  a\cdot V+2\rho,\  \ldots, \ a\cdot V+k\rho)$$
Here $\rho >0$ is some very small constant used only for tie-breaking. For $i \in [1, k-1]$, define $\tau^i$ to be the $i^\emph{th}$ rotation of $\tau^0$, i.e., $\tau^i = (a\cdot V(i+1)\rho, \ a\cdot V+(i+2)\rho,\  ...,\  a\cdot V + k\rho,\  a\cdot V+\rho,\  a\cdot V+ 2\rho,\ \ldots,\  a\cdot V+i\rho)$. 

Now, for each $i \in [0, k - 1]$, the values of the bidders $\{A_j\}$ for $Q_i$ are determined as follows. Let $\epsilon > 0$ be a constant to be fixed later. The tuple of values $$\left(v(A_0, Q_i), \ v(A_1, Q_i),\  \ldots,\  v(A_{k-1}, Q_i)\right)$$ is set to be equal to $\epsilon \cdot \tau^i$, where the latter multiplication is coordinate-wise (so, e.g., $v(A_0, Q_i) = \epsilon(a\cdot V + (i+1)\rho)$). Fig.~\ref{fig:multiple-tight-example} shows the instance for $k=2$, i.e., with four bidders.\\

\begin{figure}[ht]
	\centering
  		\includegraphics[width=0.75\textwidth]{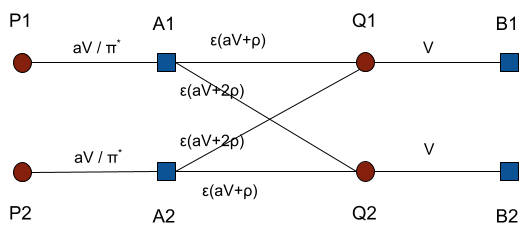}
  		\caption{An instance with $k=2$, i.e., 4 bidders and 4 queries. Here, $V > 1, a > 1, \epsilon > 0$. We take $V$ to be large, $a\rightarrow 1$, and $\epsilon\rightarrow 0$. $\rho$ can be considered a formal variable for tie-breaking.}
  		\label{fig:multiple-tight-example}
\end{figure}
{\bf Bid multipliers: }
For this instance, consider the following bid-multipliers: $\mu(A_i) = \frac{1}{\epsilon}, ~ \forall i$, and $\mu(B_i) = 1, ~ \forall i$.
\paragraph{Bids for $P_i$: } Thus, for each $i$, $bid(A_i, P_i) = \frac{a\cdot V}{\epsilon\cdot \pi^*}$ and $bid(X, P_i) = 0, \forall X \neq A_i$. 
\paragraph{Bids for $Q_i$: }
For each $i$, query $Q_i$ gets the following bids: Firstly $bid(B_i, Q_i) = V$, and $bid(B_j, Q_i) = 0, \forall j\neq i$.
Further, the tuple $$(bid(A_0, Q_i), bid(A_1, Q_i), ..., bid(A_{k-1}, Q_i)) = \tau^i.$$

{\bf The goal of the rest of the proof:} To prove the theorem, we need to prove that for this instance and these bid-multipliers:
\begin{enumerate}
    \item The welfare achieved in the auction with these bids is close to a half of the optimal possible welfare.
    \item The bids form an equilibrium, i.e., for each bidder. (Def.~\ref{def:equilibrium}):
    \begin{enumerate}
        \item Their tCPA constraint is satisfied at these bids (up to a factor of $1+\gamma$).
        \item They can not change their bid-multiplier and gain more than $\Delta$ additive value while staying within $(1+\gamma)$ of the tCPA constraint.
    \end{enumerate}
\end{enumerate}
\paragraph{Proving the result for any randomized auction.} The challenge is to prove the above for any randomized auction. To do this, firstly, note that our instance is almost universal, i.e., it only depends on the given auction via the max-probability $\pi^*$ and up to scaling of all numbers in the instance by the max-threshold $M^*$. Secondly, we rely only on basic properties that all anonymous randomized truthful auctions satisfy: (a) Lemma~\ref{lem:1cs-cost} which tackles the dependency of $\pi^*$ and $M^*$, (b) consequences of anonymity such as Lemma~\ref{lem:fairness-fraction}, and (c) simple properties following from truthfulness.\medskip

{\bf (1) Welfare: } 
With these bids, the auction for $P_i$ has only one bidder $A_i$ and its bid is at least the max-threshold (which is at most 1), 
so from Def.~\ref{def:max-threshold} $P_i$ is allocated to $A_i$ w.p. $\pi^*$. Query $Q_i$ gets allocated to the ``correct" bidder $B_i$ with probability at most $\frac{1}{k+1}$ since it is the lowest among $k+1$ bidders (Lemma~\ref{lem:fairness-fraction}). 
Thus the expected value obtained for the $P_j$ is $a\cdot V$ each, and for each $Q_j$ is at most $\frac{1}{k+1} V +  \epsilon (a\cdot V +k\rho)$ (the latter term is an upper bound, being the highest possible value among the $A_j$ for $Q_i$). 

Thus the total value achieved in the auctions under these bids is (ignoring terms with $\rho$)\footnote{The constant $\rho$ can be taken to be very small, in fact it can be considered to be a formal variable, introduced purely for tie-breaking among the $A_i$. Thus, we will not include the $\rho$ in the quantification of welfare and incentives.} 
\begin{align}
%Welfare \leq a\cdot k + \frac{k}{k+1} + k \epsilon(a + k \rho)
\cw(\eq) &\leq \pi^*\cdot \frac{a\cdot V}{\pi^*}\cdot k + \frac{k}{k+1} V + \epsilon \cdot a\cdot V\cdot k \notag \\
&= \left( a + \frac{1}{k+1} + \epsilon \cdot a\right)\cdot V\cdot k. \label{eq:welfare-in-k-eq}
\end{align}

The optimal allocation allocates each $P_i$ to $A_i$, and $Q_i$ to $B_i$, giving a value of 
\begin{equation}
    \label{eq:opt-value-k-bidders}
    OPT = \left(\frac{a}{\pi^*} + 1\right)\cdot V\cdot k.
\end{equation}

{\bf (2) Proving equilibrium: }
It remains to prove that this set of bids constitutes a $(\Delta, \gamma)-$equilibrium.

{\bf (2a) Bidders $A_i$'s constraint: } 
From Lemma~\ref{lem:1cs-cost}: $A_i$ is allocated $P_i$ w.p. $\pi^*$ and the cost for $P_i$ is at most $\pi^*$ since we re-scaled to set the max-threshold $M^* \leq 1$. Also, the per-unit price for $A_i$ for any $Q_j$ is at most its per-unit bid for $Q_j$. So we get (dropping the terms involving $\rho$):
\begin{align}
value(A_i) &= v(A_i, P_i) \cdot P[P_i \mbox{ allocated to }A_i]\notag\\ 
&+ \sum_{j=0}^{k-1} v(A_i, Q_j) \cdot P[Q_j \mbox{ allocated to }A_i] 
\notag\\    
&\geq a\cdot V + \sum_{j=0}^{k-1}\epsilon \cdot a\cdot V \cdot P[Q_j \mbox{ allocated to }A_i] 
\label{eq:value-ai}
\end{align}

\begin{align}
cost(A_i) &\leq \pi^* + \sum_{j=0}^{k-1} bid(A_i, Q_j) \cdot P[Q_j \mbox{ allocated to }A_i] \notag\\
&\leq 1 + \sum_{j=0}^{k-1} a\cdot V \cdot P[Q_j \mbox{ allocated to }A_i]\label{eq:cost-ai}
\end{align}

To get a handle on the probabilities in Eqs.~\ref{eq:value-ai} and~\ref{eq:cost-ai}, we consider {\bf a special setting} in which $\mathcal{A}$ is given $k+1$ bidders with bids equal to the tuple $$\beta = (V, \ a\cdot V+\rho,\  a\cdot V+2\rho\ , ..., a\cdot V+k\rho)$$
Let $(\chi_0, \chi_1, ..., \chi_k)$ be the probabilities with which $\mathcal{A}$ allocates the item to bidders $0,.., k$ in this setting. Clearly, $\sum_{s=0}^k \chi_s \leq 1$.

Given this property of $\mathcal{A}$, we can bound the total probability with which the $Q_j$ are allocated in our instance. Note that each $Q_j$ has $k+1$ bidders with a set of bids equal to the tuple $\beta$ defined above. Hence each $Q_j$ will be allocated according to the allocation $\{ \chi\}$ defined above. Now due to the rotational symmetry across the $Q_j$, $A_i$ is the $t_{ij} := ((i+j) \mod k +1)^\emph{th}$ lowest bidder for $Q_j$ among the $\{A_{\cdot}\}$ (while $B_j$ is the lowest bidder for $Q_j$). Therefore, as a consequence of anonymity, $A_i$ gets allocated $Q_j$ with probability $\chi_{t_{ij}}$. As $j$ ranges over $[0, k-1]$, $t_{ij}$ ranges over $[1, k]$. Thus, we get:
\begin{equation}
\label{eq:total-q-to-a}
    \sum_{j=0}^{k-1} P[Q_j \mbox{ allocated to }A_i] = \sum_{s=1}^{k} \chi_s \leq 1
\end{equation}
That is, for all $i$, $A_i$ is allocated an at most a unit expected total amount of any of the $Q_j$.

Thus, using Eq.~\ref{eq:total-q-to-a} in Eqs.~\ref{eq:value-ai} and~\ref{eq:cost-ai}, we get:

\begin{equation}
    \label{ai-gamma-constraint}
    cost(A_i) \leq \left(1 + \frac{1}{a\cdot V}\right) value(A_i)
\end{equation}

{\bf (2b) Bounding the gain from defection for bidder $A_i$: }
Reducing the bid multiplier can not help since that can only reduce the value obtained. Now even if $A_i$ increases its bid to infinity, the maximum value it can achieve is (ignoring the terms with $\rho$):
\begin{align*}
\pi^* \cdot v(A_i, P_i) + \sum_{j=0}^{k-1} v(A_i, Q_j) 
%&= a\cdot V + \sum_{t=1}^{k} \epsilon (a\cdot V + t \cdot \rho)\\ 
\leq a\cdot V + k \cdot\epsilon \cdot a\cdot V 
\end{align*}
Thus, using Eq.~\ref{eq:value-ai}, the highest that $A_i$ can gain in value by deviating from $\mu(A_i)$ (even ignoring its tCPA constraint) is
\begin{equation}
\label{eq:ai-gain-bound}
Gain(A_i) \leq k\cdot\epsilon\cdot a\cdot V
\end{equation}

\indent{\bf (2a) Bidders $B_i$'s constraint: } Firstly, for each $i$, $B_i$ only bids on one query ($Q_i$), and its bid is equal to its value, since $\mu(B_i) = 1$. Since the per-unit cost is at most the bid, $B_i$'s tCPA constraint is satisfied no matter the probability with which $Q_i$ is allocated to $B_i$.

{\bf (2b) Bounding the gain from defection for bidder $B_i$: }
To understand the incentive to deviate from $\mu(B_i) = 1$, note first that lowering its bid can only decrease its value obtained. Since $B_i$ is the lowest bidder for $Q_i$, and the other bids are $a + j \rho$ for $j \in [1, k]$, by Lemma~\ref{lem:fairness-fraction} we have that $B_i$ is allocated $Q_i$ with probability at most $\frac{1}{k+1}$. Now suppose $B_i$ raises its bid-multiplier so that the bid for $Q_i$ increases to some value $bid'(B_i, Q_i)$ and it wins $Q_i$ with some increased probability $p$.

Since the auction is truthful, the total cost to $B_i$ at a bid of $bid'(B_i, Q_i)$ can computed using Myerson's lemma for truthful pricing (the area above the allocation curve). Consider the function $\mathcal{F}$ which denotes the probability of allocating to $B_i$ according to $\mathcal{A}$ as $B_i$ changes its bid, given that the other bidders $\{A_j\}$ bids are fixed at $\{aV + j\rho\}_{j \in [1, k]}$. 
Consider the curve at the bid $bid'(B_i, Q_i)$ giving a probability of $p$. 
The value to $B_i$ is 
$$value'(B_i) = p\cdot V$$
The cost to $B_i$ is
\begin{align*}
cost'(B_i) &= \int_{b=0}^{bid'(B_i, Q_i)} \Big(p - \mathcal{F}(b)\Big) db\\
&\geq \Big(p - \mathcal{F}(a\cdot V)\Big) \cdot a\cdot V \\
& \geq \left(p - \frac{1}{k+1}\right) \cdot a\cdot V
\end{align*}
Here, first inequality follows because the allocation curve is non-decreasing, and $bid'(B_i, Q_i) \geq a\cdot V$, since the lowest bid among the other bidders is $a\cdot V+\rho$. The second inequality follows from Lemma~\ref{lem:fairness-fraction}, because at a bid of $a \cdot V$, $B_i$ is still the lowest among $k+1$ bidders. Fig.~\ref{fig:bi-qi-defection-cost-myerson} gives a pictorial representation of this bound on the cost.
\begin{figure}[t]
	\centering
  	\includegraphics[width=0.75\textwidth]{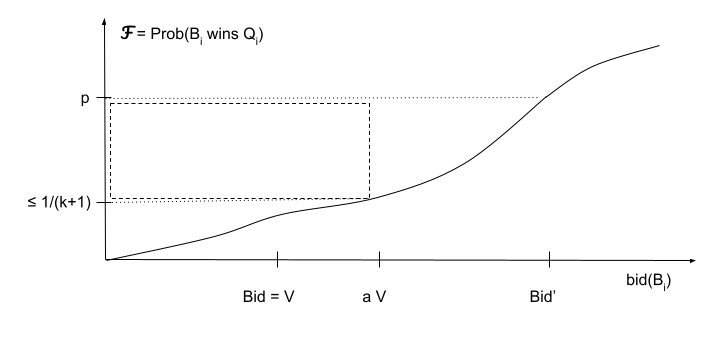}
  		\caption{Bounding the cost for $B_i$ as it raises its bid by the area of the rectangle in the figure.}
  	\label{fig:bi-qi-defection-cost-myerson}
\end{figure}

Since the tCPA constraint needs to be satisfied at the new bid up to a factor of $(1+\gamma)$, we have (dropping the terms with $\rho$):
\begin{align*}
cost'(B_i) \leq (1+\gamma)\cdot value'(B_i) 
 ~~&\Rightarrow \left(p - \frac{1}{k+1}\right) \cdot a\cdot V \leq (1+\gamma)\cdot p \cdot V \\
 & \Rightarrow p \leq \frac{a}{(a- 1 - \gamma)(k+1)}
\end{align*}
Thus we have a bound on the maximum possible gain that $B_i$ can get by changing its bid while still respecting its tCPA constraint:
\begin{equation}
    \label{eq:bi-gain-bound}
    Gain(B_i) \leq \frac{a\cdot V}{(a- 1-\gamma)(k+1)}
\end{equation}

{\bf Setting the parameters.} Now we can set our parameters to satisfy all the constraints for a $(\Delta, \gamma)-$equilibrium with a welfare close to a half of the optimal welfare.
From Eq.~\ref{ai-gamma-constraint}, we see that by taking $V > \frac{1}{\gamma}$, and noting that $a\geq 1$, $A_i$'s tCPA constraint is satisfied up to a $(1 + \gamma)$ factor (the tCPA constraint for each $B_i$ is strictly satisfied).
From Eq.~\ref{eq:bi-gain-bound}, we see that by taking $$k + 1 > \frac{aV}{(a-1-\gamma)\Delta} > \frac{a}{(a-1-\gamma)\Delta\gamma}$$
we get that $Gain(B_i)$ is no more than $\Delta$.  
From Eq.~\ref{eq:ai-gain-bound}, we see that by taking $\epsilon \rightarrow 0$, small enough so that $k\epsilon a V < \Delta$, we get  that $Gain(A_i)$ is no more than $\Delta$.
Finally, by taking $a \rightarrow 1$ (recall $a \geq 1$), we get from Eq.~\ref{eq:welfare-in-k-eq} and ~\ref{eq:opt-value-k-bidders} that the value obtained via the bidding in any randomized auction is at most
$$\frac{a}{\tfrac{a}{\pi^*}+1} + \frac{1}{(k+1)(\tfrac{a}{\pi^*}+1)} +\frac{\epsilon a}{\tfrac{a}{\pi^*}+1} \rightarrow \frac{\pi^*}{1 + \pi^*} \leq \frac{1}{2},$$
%$$\frac{a}{a+1} + \frac{1}{(k+1)(a+1)} +\frac{\epsilon a}{a+1} \rightarrow \frac{1}{2},$$ 
(as $a\rightarrow 1, \epsilon\rightarrow 0$ and $k\rightarrow\infty$), thus proving the theorem.
\end{proof}

\bibliographystyle{alpha}
\bibliography{poa}

\newcommand{\etalchar}[1]{$^{#1}$}
\begin{thebibliography}{DMMZ21}

\bibitem[AB20]{AllouahB20}
Amine Allouah and Omar Besbes.
\newblock Prior-independent optimal auctions.
\newblock {\em Manag. Sci.}, 66(10):4417--4432, 2020.

\bibitem[ABM19]{AggarwalBM19}
Gagan Aggarwal, Ashwinkumar Badanidiyuru, and Aranyak Mehta.
\newblock Autobidding with constraints.
\newblock In {\em Web and Internet Economics - 15th International Conference,
  {WINE} 2019, Proceedings}, volume 11920 of {\em Lecture Notes in Computer
  Science}, pages 17--30. Springer, 2019.

\bibitem[AFGR17]{AzarFGR17}
Yossi Azar, Michal Feldman, Nick Gravin, and Alan Roytman.
\newblock Liquid price of anarchy.
\newblock In {\em Algorithmic Game Theory - 10th International Symposium,
  {SAGT} 2017, L'Aquila, Italy, September 12-14, 2017, Proceedings}, pages
  3--15, 2017.

\bibitem[BDM{\etalchar{+}}21]{BalseiroDMMZ21}
Santiago~R. Balseiro, Yuan Deng, Jieming Mao, Vahab~S. Mirrokni, and Song Zuo.
\newblock The landscape of auto-bidding auctions: Value versus utility
  maximization.
\newblock In {\em {EC} '21: The 22nd {ACM} Conference on Economics and
  Computation, 2021}, pages 132--133. {ACM}, 2021.

\bibitem[BG19]{BalseiroG19}
Santiago~R. Balseiro and Yonatan Gur.
\newblock {Learning in Repeated Auctions with Budgets: Regret Minimization and
  Equilibrium}.
\newblock {\em Management Science}, 65(9):3952--3968, September 2019.

\bibitem[DL14]{DobzinskiL14}
Shahar Dobzinski and Renato~Paes Leme.
\newblock Efficiency guarantees in auctions with budgets.
\newblock In {\em Automata, Languages, and Programming - 41st International
  Colloquium, {ICALP} 2014, Copenhagen, Denmark, July 8-11, 2014, Proceedings,
  Part {I}}, pages 392--404, 2014.

\bibitem[DMMZ21]{DengMMZ21}
Yuan Deng, Jieming Mao, Vahab Mirrokni, and Song Zuo.
\newblock Towards efficient auctions in an auto-bidding world.
\newblock In {\em Proceedings of the Web Conference 2021}, WWW '21, page
  3965–3973. Association for Computing Machinery, 2021.

\bibitem[DRY15]{DhangwatnotaiRY15}
Peerapong Dhangwatnotai, Tim Roughgarden, and Qiqi Yan.
\newblock Revenue maximization with a single sample.
\newblock {\em Games and Economic Behavior}, 91:318--333, 2015.

\bibitem[Fac22]{fbautobiddingsupport}
Auto-bidding~products~support~page.
\newblock \url{https://www.facebook.com/business/help/1619591734742116}, 2022.
\newblock Accessed: 2022-02-09.

\bibitem[FILS15]{FuILS15}
Hu~Fu, Nicole Immorlica, Brendan Lucier, and Philipp Strack.
\newblock Randomization beats second price as a prior-independent auction.
\newblock In {\em Proceedings of the Sixteenth {ACM} Conference on Economics
  and Computation, {EC} '15, 2015}, page 323. {ACM}, 2015.

\bibitem[FMPS07]{FeldmanMPS07}
Jon Feldman, Shanmugavelayutham Muthukrishnan, Martin Pal, and Cliff Stein.
\newblock Budget optimization in search-based advertising auctions.
\newblock In {\em Proceedings of the 8th ACM conference on Electronic
  commerce}, pages 40--49, 2007.

\bibitem[GLPL21]{GolrezaiLP21}
Negin Golrezaei, Ilan Lobel, and Renato Paes~Leme.
\newblock Auction design for roi-constrained buyers.
\newblock In {\em Proceedings of the Web Conference 2021}, WWW '21, page
  3941–3952, 2021.

\bibitem[Goo22]{googleautobiddingsupport}
Auto-bidding~products~support~page.
\newblock \url{https://support.google.com/google-ads/answer/2979071}, 2022.
\newblock Accessed: 2022-02-09.

\bibitem[HR14]{hartline2014optimal}
Jason~D. Hartline and Tim Roughgarden.
\newblock Optimal platform design, 2014.

\bibitem[Mye81]{Myerson81}
Roger Myerson.
\newblock Optimal auction design.
\newblock {\em Mathematics of Operations Research}, 6(1):58--73, 1981.

\end{thebibliography}

\end{document}